\begin{document}
\title{Backgammon is Hard}
\titlerunning{Backgammon is Hard}
\author{R. Teal Witter\inst{1}\orcidID{0000-0003-3096-3767}}
\authorrunning{R.T. Witter}
\institute{NYU Tandon, Brooklyn NY 11201, USA
\email{rtealwitter@nyu.edu}}
\maketitle
\begin{abstract}
We study the computational complexity of the popular board game backgammon. We show that deciding whether a player can win from a given board configuration is NP-Hard, PSPACE-Hard, and EXPTIME-Hard under different settings of known and unknown opponents' strategies and dice rolls. Our work answers an open question posed by Erik Demaine in 2001. In particular, for the real life setting where the opponent's strategy and dice rolls are unknown, we prove that determining whether a player can win is EXPTIME-Hard. Interestingly, it is not clear what complexity class strictly contains each problem we consider because backgammon games can theoretically continue indefinitely as a result of the capture rule.
\keywords{Computational complexity \and Games}
\end{abstract}
\section{Introduction}
Backgammon is a popular board game played by two players.
Each player has 15 pieces that lie on 24
points evenly spaced on a board.
The pieces move in opposing directions according
to the rolls of two dice.
A player wins if they are the first to move all of their
pieces to their home and then off the board.

The quantitative study of backgammon began
in the early 1970's and algorithms for the game progressed quickly.
By 1979, a computer program had beat the World Backgammon
Champion 7 to 1 \cite{berliner1980backgammon}.
This event marked the first time a computer program
bested a reigning human player in a recognized
intellectual activity.
Since then, advances in backgammon programs
continue especially through the use of
neural networks
\cite{pollack1998co,tesauro1994td,tesauro2002programming}.

On the theoretical side, backgammon has been
studied from a probabilistic perspective as
a continuous process and random walk
\cite{keeler1975optimal,thorp2007backgammon}.
However, the computational complexity of backgammon
remains an open problem two decades after it was
first posed \cite{demaine2001playing}.
One possible explanation (given in online resources)
is that the generalization of backgammon is unclear.

\begin{table}[h]
    \centering
    \caption{A selection of popular games and computational
    complexity results.}
    \begin{tabular}{|c|c|c|}
    \hline
    Game& Complexity Class  \\ 
    \hline
    Tic-Tac-Toe
    & PSPACE-Complete 
    \cite{reisch1981hex} \\
    Checkers
    & EXPTIME-Complete
    \cite{robson1984n} \\
    Chess
    & EXPTIME-Complete \cite{fraenkel1981computing} \\
    Bejeweled
    & NP-Hard \cite{guala2014bejeweled} \\
    Go
    & EXPTIME-Complete \cite{robson1983complexity} \\
    Hanabi
    & NP-Hard \cite{baffier2017hanabi} \\
    Mario Kart
    & PSPACE-Complete \cite{bosboom2015mario} \\
    \hline
    \end{tabular}
    \label{tab:games}
\end{table}

From a complexity standpoint, backgammon stands in 
glaring contrast to many other popular games.
Researchers have established the complexity
of numerous games including those listed in
Table \ref{tab:games} but we are
not aware of any work on the complexity
of backgammon.

In this paper, we study the computational complexity of
backgammon.
In order to discuss the complexity of the game,
we propose a natural generalization of backgammon.
Inevitably, though, we have to make arbitrary choices such
as the number or size of dice in the generalized game.
Nonetheless, we make every
effort to structure our reductions
so that they apply to as many generalizations
as possible.

There are two main technical issues that make
backgammon particularly challenging to analyze.
The first is the difficulty in forcing a player
into a specific move.
All backgammon pieces follow 
the same rules of movement
and there are at least 15 unique combinations
of dice rolls
(possibly more for different generalizations)
per turn.
For other games, this problem has been solved by
more complicated reductions and extensive reasoning
about why a player has to follow specified moves
\cite{buchin2021dots}.
In our work, we frame the backgammon problem
from the perspective of a single player
and use the opponent and dice rolls
to force the player into predetermined moves.

The second challenge is that the backgammon
board is one-dimensional.
Most other games with 
computational complexity results have
at least two dimensions of
play which creates more structure in the
reductions \cite{hearn2009games}.
We avoid using multiple dimensions
by carefully picking Boolean satisfiability
problems to reduce from.

We show that deciding whether a player
can win is NP-Hard, PSPACE-Hard, and EXPTIME-Hard
for different settings of
known or unknown dice rolls and opponent
strategies.
In particular, in the setting most similar
to the way backgammon is actually played
where the opponent's strategy and dice rolls
are unknown,
we show that deciding whether
a player can win is EXPTIME-Hard.
Our work answers an open problem posed by Demaine
in 2001 \cite{demaine2001playing}.

In Section \ref{sec:gen}, we introduce the relevant
rules of backgammon and generalize it
from a finite board
to a board of arbitrary dimension.
In Section \ref{sec:np}, we prove that
deciding whether a player can 
win even when all future
dice rolls and the opponent's strategy are known
is NP-Hard.
In Section \ref{sec:pspace}, we prove that deciding
whether a player can win when 
dice rolls are known and the
opponent's strategy is unknown is PSPACE-Hard.
Finally in Section \ref{sec:exptime}, we prove that
deciding whether a player can win 
when dice rolls and
the opponent's strategy are unknown is
EXPTIME-Hard.

\section{Backgammon and its Generalization}\label{sec:gen}
We begin by describing the rules of backgammon
relevant to our reductions.
When played in practice, the backgammon board consists
of 24 points where 12 points lie on Player 1's side
and 12 points lie on Player 2's side.
However, without modifying the structure of the game,
we will think of the board as a line
of 24 points where Player 1's home consists of the rightmost
six points and Player 2's home consists of the leftmost
six points.
Figure \ref{fig:board} shows the relationship
between the regular board and our equivalent model.
Player 1 moves pieces right according to dice rolls
while Player 2 moves pieces left.
The goal is to move all of one's pieces home and then
off the board.
\begin{figure}
    \centering
    \includegraphics[scale=.18, page=2]{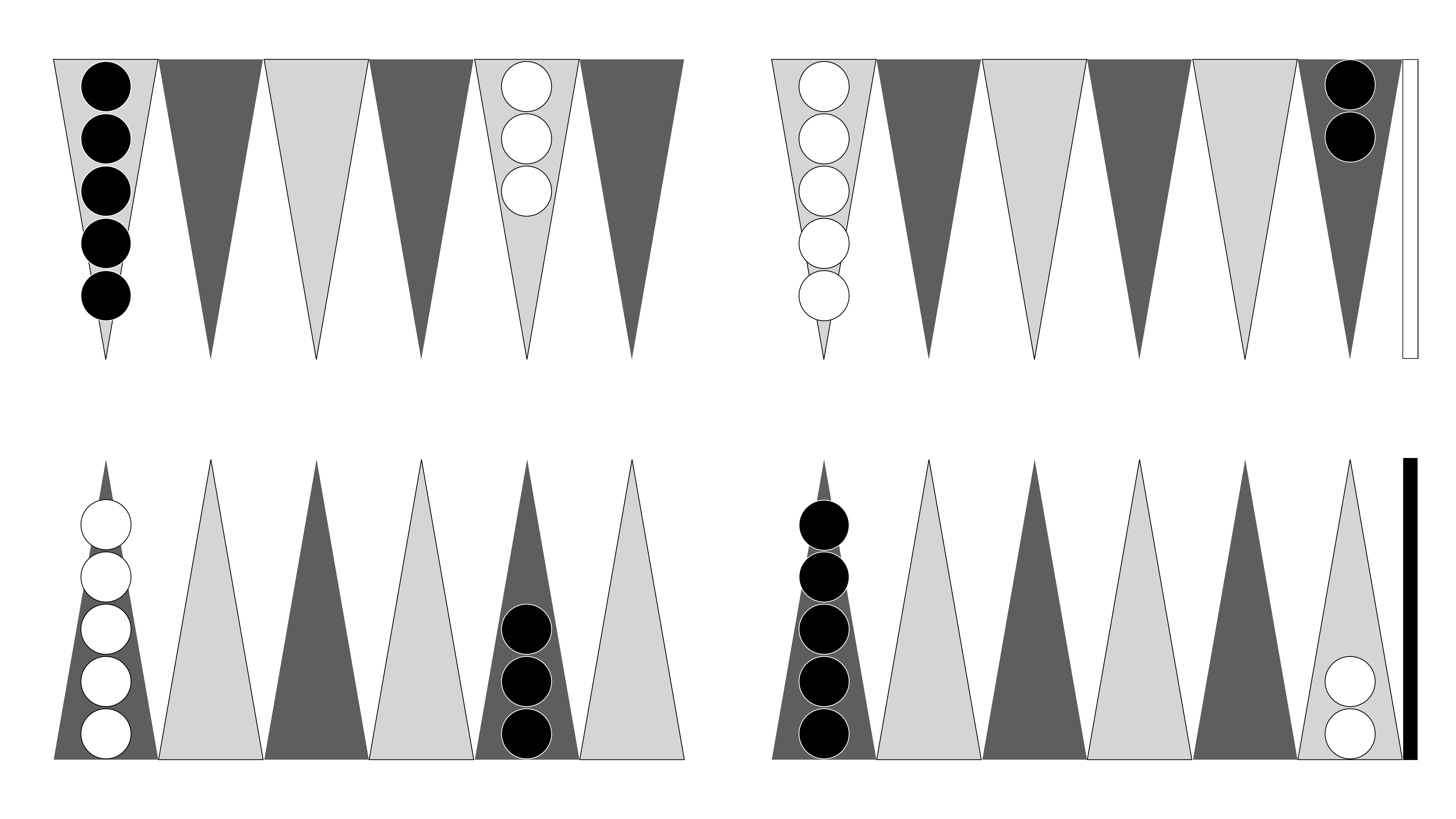}
    \caption{Backgammon board in normal play (top);
    equivalent board `unfolded' (bottom).}
    \label{fig:board}
\end{figure}

Players move their pieces by taking turns rolling
dice.
On their roll, a player may move one or more pieces
`forward' (right for Player 1 and left for Player 2)
by the numbers on the dice provided that
the new points are not blocked.
A point is blocked if the opponent has at least two
pieces on it.
The turn ends when either the player has moved their
pieces or all moves are blocked.
Note that a player must always use
as many dice rolls as possible
and if the same number appears on two dice then
the roll `doubles' so a player now has four moves
(rather than two) of the number.

If only one of a player's pieces is on a point,
the opponent may capture it by moving a piece to the point.
The captured piece is moved off the board and must
be rolled in from the opponent's home \textit{before}
any other move may be made.
This sets back the piece and can prove particularly
disadvantageous if all of the points in the opponent's
home are blocked.

The obvious way to generalize the backgammon board
used in practice is to concatenate multiple boards together,
keeping the top right as Player 1's home and the
bottom right as Player 2's home.
In the line interpretation,
we can equivalently view this procedure as adding
more points between the respective homes.
The rules we described above naturally extend.
We formalize this generalization in
Definition \ref{def:general}.

\begin{definition}[Generalized Backgammon]
    \label{def:general}
    Let $m$ be a positive integer given as input.
    We define constants $h \geq 6$,
    and $d \geq 2$ $s \geq 6$
    where the lower bounds originate
    from traditional backgammon.
    Then a generalized backgammon instance
    consists of $n$ points on a line with the leftmost $h$
    and rightmost $h$ marked as each player's home
    and $d$ dice each with $s$ sides.
    We require the number of pieces $p$
    to be polynomial in $m$ and specify our choice of
    $p$ in the reductions.
\end{definition}

In our proofs, we fix the constants without loss
of generality by modifying our reductions.
We assume $h=6$ home points
by blocking additional points in the opponent's home.
We also assume
$d=2$ dice and $s=6$ sides by
rolling blocked pieces for the player and using dummy moves
for the opponent.

\section{NP-Hardness}\label{sec:np}
In this section, we show that determining
whether a player can win against a
known opponent's strategy and known dice rolls (KSKR)
is NP-Hard.
We begin with formal definitions
of Backgammon KSKR and the NP-Complete
problem we reduce from.

\begin{definition}[Backgammon KSKR]
The input is a configuration on a generalized
backgammon board, a complete description
of the opponent's strategy, and all future
dice rolls both for the player and opponent.
The problem is to determine whether a player
can win the backgammon game from the backgammon
board against the opponent's strategy and with
the specified dice rolls.
\end{definition}

We do not require that the configuration
is easily reachable from the start state.
However, one can imagine that given sufficient
time and collaboration between two players,
any configuration is reachable using the
capture rule.

An opponent's strategy is known if the player knows
the moves the opponent will make from all
possible positions in the resulting game.
Notice that such a description can be very large.
However, in our reduction,
we limit the number of possible positions
by forcing the player to make specific moves and 
predetermining the dice rolls.
Therefore the reduction stays polynomial
in the size of the 3SAT instance.
We formalize this intuition in
Lemma \ref{lemma:poly}.

\begin{definition}[3SAT]
The input is a Boolean expression in
Conjunctive Normal Form (CNF)
where each clause has at most three variables.
The problem is to determine whether a
satisfying assignment to the CNF exists.
\end{definition}

Given any 3SAT instance,
we construct a backgammon board configuration,
an opponent strategy, and dice rolls so
that the solution to Backgammon KSKR
yields the solution to 3SAT.
Since 3SAT is NP-Complete \cite{karp1972reducibility},
Backgammon KSKR must be NP-Hard.
We state the result formally below.

\begin{theorem}\label{thm:np}
Backgammon KSKR is NP-Hard.
\end{theorem}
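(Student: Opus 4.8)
The plan is to give a polynomial-time reduction from 3SAT. Given a 3CNF formula $\varphi$ with variables $x_1,\dots,x_k$ and clauses $C_1,\dots,C_\ell$, I would build a generalized backgammon board, a fixed schedule of dice rolls, and a deterministic opponent strategy so that Player 1 can win from the constructed configuration if and only if $\varphi$ is satisfiable. The governing idea, already flagged in the introduction, is to take the single-player viewpoint: the opponent and the prescribed dice rolls conspire to \emph{force} nearly every move, except at a small number of designated ``choice points,'' one per variable $x_i$, where the player is handed a roll that allows exactly two legal continuations — these correspond to setting $x_i$ to true or false. Concretely, I would lay out the board as a long line divided into a ``variable gadget'' region and a ``clause-checking'' region. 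Each variable gadget is a short stretch of points with a single player piece that, on its forced turn, must be advanced along one of two parallel ``lanes'' (realized by which points the opponent has pre-blocked with stacks of two); the lane chosen is a permanent record of the truth value because, once past, the blocking pattern prevents the piece from ever switching lanes.

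Next I would implement clause checking. For each clause $C_j = (\ell_{j,1}\vee\ell_{j,2}\vee\ell_{j,3})$ I want a point (or short gauntlet) that Player 1's ``runner'' piece must traverse on a forced roll, and which is blocked unless at least one of the three literals was satisfied. The mechanism: when a variable gadget is resolved in the ``true'' direction for literal $\ell$, the player's move there also clears (captures, or simply fails to create) a blocker sitting in the clause-$j$ gauntlet for every clause containing $\ell$; dually for the ``false'' direction. If all three literals of $C_j$ are set the wrong way, all three blockers in the gauntlet remain, the runner cannot pass on its forced roll, the turn stalls, and — crucially — the player can then never bear off in time, so Player 1 loses. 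If some literal is satisfied, the corresponding blocker is absent, the runner threads through, and (with all other moves forced and the dice schedule arranged so every bear-off roll is exactly available) Player 1 wins. I would use the capture rule plus the rule that captured pieces must re-enter from the opponent's home — with that home deliberately blocked — as the ``trap'' that makes a single wrong step fatal, since that converts ``got blocked once'' into ``loses the game.''

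The correctness argument then splits into the two standard directions. If $\varphi$ has a satisfying assignment, the player plays the matching lane at each variable choice point and makes the unique forced move everywhere else; I argue by induction along the dice schedule that every such move is legal, every clause gauntlet is passable, and the final segment of rolls bears all $15$ (or $p$) pieces off, so the player wins. Conversely, any winning play must, at each choice point, commit to one of the two lanes (there is no third option and skipping is illegal since a player must use as many dice as possible), hence induces a total assignment; if that assignment left some $C_j$ unsatisfied, the argument above shows the runner stalls at gauntlet $j$ and the game is lost, contradiction — so the induced assignment satisfies $\varphi$. Finally, Lemma \ref{lemma:poly} (invoked as permitted) certifies that because all dice are fixed and all non-choice moves are forced, the reachable position set is polynomial, so the explicit description of the opponent strategy — and the whole instance — has size polynomial in $|\varphi|$; I also note $p=\mathrm{poly}(m)$ is respected and that the home-size, dice-count, and side-count normalizations from Section \ref{sec:gen} leave the construction intact.

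The main obstacle I expect is enforcing forced moves on the one-dimensional board while still keeping exactly \emph{two} options at the variable choice points: backgammon lets a player split a roll among several pieces and must use the maximal number of pips, so I have to design the blocking stacks and the relative positions of all pieces so that, at every non-choice turn, exactly one distribution of the rolled pips across the player's pieces is legal, and at each choice turn exactly two are. Getting the ``one wrong step is fatal'' trap to interact cleanly with the doubles rule and with the requirement that the player cannot voluntarily stall is the delicate part; I would handle it by keeping the runner piece isolated on its own lane so its pip-usage never interferes with the variable pieces, and by padding the dice schedule with rolls that only the intended piece can consume.
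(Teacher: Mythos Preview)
Your high-level architecture matches the paper's: a 3SAT reduction with variable gadgets offering a binary choice, clause gadgets downstream, the opponent-plus-dice conspiracy forcing all non-choice moves, and the capture-into-a-blocked-home trap as the loss mechanism. Where you diverge is the clause test. The paper does \emph{not} use a runner that must thread a gauntlet; instead each clause gadget holds a single exposed Player~1 piece, and satisfying a literal means physically delivering a second Player~1 piece onto that point to \emph{protect} it. After all variables are set, Player~2 sweeps through and captures any still-lonely clause piece, which is then stuck behind Player~2's blocked home. To make this delivery possible the paper puts, in each variable gadget, a \emph{stack} of $x_i$-pieces and a stack of $\overline{x_i}$-pieces (one piece per clause occurrence of that literal), and then runs an explicit propagation phase that walks those pieces, one by one with tailored rolls, from the variable gadget to their destination clauses.

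That propagation phase is exactly the piece your sketch is missing. You write that resolving $x_i$ ``also clears (captures, or simply fails to create) a blocker sitting in the clause-$j$ gauntlet for every clause containing $\ell$,'' but a single backgammon move touches two points, so one choice move cannot affect several distant clause gauntlets; a two-high blocker cannot be captured; and ``fails to create'' pushes the work onto the opponent's strategy without saying how that strategy reads off the $2^k$ possible assignment histories and reacts consistently, in polynomial description size. The paper's stacks-plus-propagation idea is what makes this concrete, and your gauntlet variant would need an analogous mechanism spelled out before the reduction is complete. Separately, your appeal to Lemma~\ref{lemma:poly} misstates it: the reachable position set is \emph{not} polynomial (after the $k$ choice turns there are $2^k$ branches); what Lemma~\ref{lemma:poly} actually argues is that the dice schedule and opponent rule are \emph{describable} in polynomial space because they depend only on the current stage and gadget index, not on the full assignment history.
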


\begin{proof}
Our proof consists of a reduction from 3SAT
to Backgammon KSKR.
Assume we are given an arbitrary 3SAT instance
with $n$ variables and $k$ clauses.
First, we force Player 1 (black) to choose
either $x_i$ or $\neg x_i$ for every $i \in [n]$.
Then, we propagate their choice into the appropriate clauses
in the Boolean expression from the 3SAT instance.
Finally, we reach a board configuration where Player 1
wins if and only if they have
chosen an assignment of bits that 
satisfies the Boolean expression.

We compartmentalize the process into ``gadgets.''
Each gadget simulates the behavior
of a part of the 3SAT problem:
There is an assignment gadget for every variable
that forces Player 1 to choose
either $x_i$ or $\neg x_i$ for every $i \in [n]$.
There is a clause gadget for every clause
that records whether the assignment Player 1
chose satisfies $c_j$ for every $j \in [k]$.

Player 1 wins if and only if their assignment
satisfies all clauses.
We ensure this by putting a single black piece
in each clause.
Player 1 satisfies the clause by protecting
their piece.
Once the assignment has been propagated
to the clauses,
Player 2 (white) captures any unprotected piece.
If even a single clause is unsatisfied (i.e. a
single piece is open), Player 2 traps the piece
and moves all the white pieces home before Player 1
can make a single additional move.
We block Player 2's home and use the rule
that a captured piece must be rolled in the
board before any other move can be made.

If, on the other hand, Player 1 satisfies every
clause then Player 1 will win since we will feed
rolls with larger numbers to Player 1 and smaller
numbers to Player 2.
Player 1 will then beat Player 2 given their
material advantage in the number 
of pieces on the board.

\begin{figure}
    \centering
    \includegraphics[scale=.18, page=7]{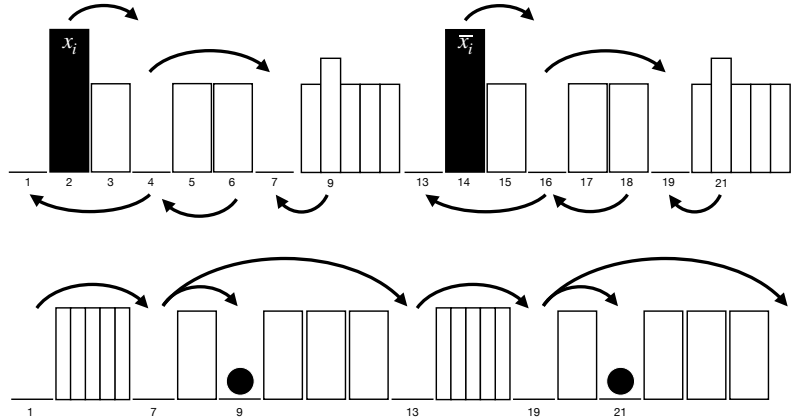}
    \caption{Reduction from 3SAT: variable gadget (top) for setting $x_i$ and two clause gadgets (bottom).}
    \label{fig:nphard}
\end{figure}

We now describe the variable and clause gadgets
in Figure \ref{fig:nphard}.
In order to simplify the concepts,
we explain the gadgets in the context of their
function in the reduction rather than
providing a complicated, technical definition.
There are $n$ variable gadgets
followed by $k$ clause gadgets arrayed in
increasing order of index from left to right.

For each $x_i$, we repeat the following process:
There are initially two white pieces each on point 4
and point 16
(the top of Figure \ref{fig:nphard}).
We move these pieces to 1 and 13 respectively while
feeding Player 1 blocked rolls e.g. one.
We then give Player 1 a two and a three.
The only moves they can make are from 2 to
4 to 7 or from 14 to 16 to 19.
This choice corresponds to setting $x_i$.
Without loss of generality, Player 1 chooses
$\overline{x_i}$ and Player 2 blocks 4 from 6
and 7 from 9.
We feed Player 1 rolls of two and three until all the
$\overline{x_i}$ pieces are on 19;
the number of these pieces is exactly the number
of times $\overline{x_i}$ appears in clauses.
We give Player 1 enough rolls to move all the
pieces corresponding to either $x_i$ or $\overline{x_i}$.
We then move to the next variable.

Once all the variables have been set, we move
down the variable gadgets from $x_n$
to $x_1$ propagating the choice of $x_i$ to
the appropriate clauses.
We use Player 2's pieces to block 1 and 13
for all variable gadgets with lower indices
so only the pieces in gadget $x_i$ can move.
(Variable gadgets with higher indices 
have already been emptied to clauses.)
For each $x_i$, we move the pieces on 19 through
the variable gadgets $x_{i+1},\ldots,x_{n}$ with
rolls of sixes.
Once we reach the clause gadgets, we move
one piece at a time with sixes
until we reach a clause that contains $\overline{x_i}$.
Once it reaches its clause,
we give the piece a two to protect the open piece.

We use a similar set of rolls for the $x_i$ pieces on 7.
Since the rolls have to be deterministic, we give
the rolls for both the $x_i$ and $\overline{x_i}$
before moving on to the $x_{i-1}$ variable gadget.
The rolls for whichever of $x_i$ and $\overline{x_i}$
Player 1 did not choose simply cannot be used.

Notice that every roll we give Player 1
can be played by exactly one piece
(except when Player 1 sets $x_i$).
While Player 1 receives rolls, we give
Player 2 `dummy' rolls of one and two to be
used on a stack of pieces near Player 2's home.

Once all variables have been set and the choices
propagated to the clauses, we give Player 2
a one to capture any unprotected pieces.
If Player 2 captures the unprotected piece,
Player 2's home is blocked so Player 1 cannot make
any additional moves until all of Player 2's pieces
are in their home.
At this point, Player 2 easily wins.
Otherwise, none of Player 1's pieces
are captured and the game becomes a race to the finish.
We give Player 2 low rolls and Player 1 high rolls
so Player 1 quickly advances and wins.

Since Player 1 wins if and only if
they find a satisfying assignment,
determining whether Player 1 can win
determines whether a satisfying
assignment to the 3SAT instance exists.
Then, with lemma \ref{lemma:poly},
Backgammon KSKR reduces from 3SAT in
polynomial space and time so Backgammon KSKR
is NP-Hard.
\end{proof}

\begin{lemma}\label{lemma:poly}
    The reduction from 3SAT to Backgammon KSKR is
    polynomial in space and time complexity
    with respect to the number of variables
    and the number of clauses in the 3SAT instance.
\end{lemma}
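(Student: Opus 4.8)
The plan is to bound, one at a time, the three objects the reduction produces — the board configuration, the sequence of dice rolls, and the description of the opponent's strategy — and to verify that each is computable from the 3SAT instance in polynomial time. Throughout, write $n$ for the number of variables, $k$ for the number of clauses, and $N = n+k$.

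First I would handle the board and the pieces. The construction lays down $n$ variable gadgets and $k$ clause gadgets in a row, each of constant width, plus a constant amount of extra structure for Player 2's home and reservoir, so the board has $O(N)$ points. For the pieces, I would note that each variable gadget carries one Player 1 piece per occurrence of its two literals; since a $3$-CNF clause contributes at most three literal occurrences, Player 1 uses $O(k)$ literal pieces together with the $k$ pieces seeded one per clause gadget, and Player 2 uses $O(N)$ blocking pieces on the points $1$ and $13$ of the variable gadgets plus a reservoir to absorb the dummy rolls. Since the specified roll schedule has $\mathrm{poly}(N)$ length (shown next), the reservoir need only be $\mathrm{poly}(N)$, so the total piece count $p$ is polynomial, as Definition \ref{def:general} demands; and producing this configuration only requires reading off, for each variable, which clauses contain it, which is clearly $\mathrm{poly}(N)$ time.

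Next I would bound the roll schedule, which splits into phases: an initialization of $O(1)$ rolls per variable gadget; for each variable, an $O(1)$-roll assignment step followed by a propagation step that walks each of its (at most $3k$ in total) literal pieces through $O(N)$ points using rolls of sixes; and a final race of $O(N)$ rolls. Determinism forces us to emit propagation rolls for both $x_i$ and $\overline{x_i}$ even though only one sequence is usable, which at most doubles the count, so the schedule has $O(Nk)$ rolls and is generated in $\mathrm{poly}(N)$ time.

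The main obstacle is the opponent's strategy, since a strategy is formally a map on positions and Player 1's $n$ independent binary choices generate up to $2^n$ reachable positions. The key observation I would develop is that the opponent is purely reactive and local: every Player 2 move is one of a constant number of types — block the point Player 1 declined in the active gadget, block points $1$ and $13$ of a lower-indexed gadget, play a dummy roll from the reservoir, or capture an unprotected clause piece and then shuffle within the blocked home — and which type fires is determined by the turn number (fixed by the schedule above) together with an $O(\log N)$-bit pointer to the active gadget and one bit recording whether Player 1's piece occupies the relevant point. So I would present the strategy as a list of $\mathrm{poly}(N)$ conditional rules of size $O(\log N)$ each, or equivalently as a polynomial-time algorithm that inspects the current configuration and the turn counter; either way its description is $\mathrm{poly}(N)$ and computable in $\mathrm{poly}(N)$ time. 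Combining the three bounds yields an output of size $\mathrm{poly}(n,k)$ produced in time $\mathrm{poly}(n,k)$, which is the claim.
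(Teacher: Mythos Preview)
Your proposal is correct and follows essentially the same decomposition as the paper's proof: bound the board size, the piece count, the number of moves/rolls, and then argue that the opponent's strategy and roll schedule admit a compact, rule-based description indexed by the current gadget and stage rather than an explicit table over positions. Your treatment of the $2^n$-position obstacle for the strategy description is in fact more explicit than the paper's, and your $O(k)$ bound on Player~1's literal pieces is tighter than the paper's looser $O(nk)$, but both routes land on the same polynomial conclusion.
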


\begin{proof}
The length of the board is linear with respect to the
variables and clauses plus some constant buffer on either end.
Player 1 has at most twice the number of clauses
for each variable while Player 2 has at most a constant
number of pieces per variable and clause.
Only one piece is captured per reduction so the number
of moves is at most the product of the length
of the board and the number of pieces.

While it is potentially exponential with respect to the input,
the description of the dice rolls and opponent's move
may be stored in polynomial space due to their simplicity.
In the assignment stage,
the rolls and opponent moves are the same for
each variable gadget and can be stored in constant space
plus a pointer for the current index.
In the propagation stage,
the rolls and opponent moves are almost the same
for each variable gadget and clause gadget except
that the number of rolls necessary to move
between the variable and clause gadgets varies.
However, we can store the number of rolls by the index
in addition to constant space for the rules.
In the end game, Player 1 and Player 2 both
move with doubles if able and the rolls are repeated
until one player wins.
\end{proof}

\section{PSPACE-Hardness}\label{sec:pspace}

In this section, we show that 
determining whether a player
can win against an unknown opponent's strategy
and known dice rolls (USKR) is PSPACE-Hard.
We begin with formal definitions of Backgammon
USKR and the PSPACE-Complete problem we reduce from.

\begin{definition}[Backgammon USKR]
The input is a configuration on a generalized
backgammon board, an opponent's strategy which
is unknown to the player,
and known dice rolls.
The problem is to determine whether a player
can win the backgammon game from the backgammon
board against the opponent's unknown
strategy and with the specified dice rolls.
\end{definition}

An opponent's strategy is unknown if the player
does not know what the opponent will play given
a possible position and dice rolls
in the resulting game.
The opponent's strategy is not necessarily
deterministic; it can be adaptive or stochastic.

\begin{definition}[$\mathbf{G_{pos}}$ \cite{schaefer1978complexity}]
The input is a positive CNF formula
(without negations) on which two players
will play a game.
Player 1 and Player 2 alternate setting
exactly one variable of their choosing.
Once it has been set, a variable may not be set again.
Player 1 wins if and only if the formula evaluates
to True after all variables have been set.
The problem is to determine whether Player 1
can win.
\end{definition}

Given any $G_{pos}$ instance, we construct a backgammon
board configuration, an unknown opponent's strategy,
and known dice rolls so that the solution to Backgammon
USKR yields the solution to $G_{pos}$.
Since $G_{pos}$ is PSPACE-Complete 
\cite{schaefer1978complexity},
Backgammon USKR must be PSPACE-Hard.
We state the result formally below.

\begin{theorem}\label{thm:pspace}
Backgammon USKR is PSPACE-Hard.
\end{theorem}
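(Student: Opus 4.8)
The plan is to adapt the gadget framework from the proof of Theorem~\ref{thm:np} to simulate the alternating, two-player game $G_{pos}$ rather than the one-player search problem 3SAT. The key structural change is that we can no longer predetermine every move: the opponent's strategy is unknown, so we must design gadgets that \emph{force} the opponent, when acting rationally, to make exactly the moves that correspond to setting variables in $G_{pos}$, and we must interleave Player~1's variable-setting moves with Player~2's. As before, we frame the reduction from the perspective of Player~1 (black): for each variable $v_\ell$ of the $G_{pos}$ instance we build an assignment gadget analogous to the variable gadget of Figure~\ref{fig:nphard}, arranged left to right, followed by a clause gadget for each clause of the positive CNF. Since the formula is positive, each clause gadget only needs to record whether \emph{some} variable feeding it was set True, which simplifies the propagation: a single black piece sits in each clause gadget and is protected precisely when one of the variables occurring in that clause is set True.

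Next I would specify how turns alternate. Because dice rolls remain known (the USKR setting), I can still use the dice to channel each player into a narrow set of legal moves on their turn, exactly as in the NP-hardness proof; what changes is that on Player~2's turns I feed rolls that leave Player~2 a genuine binary choice within the ``currently active'' variable gadget --- a choice of two versus three of the kind used to encode setting $x_i$ versus $\overline{x_i}$ in Theorem~\ref{thm:np}, except now it is Player~2 who picks. I would use blocking pieces (Player~1's spare pieces, plus the home-blocking trick from Section~\ref{sec:gen}) to ensure that on each turn only the intended gadget is live, so the players are forced to set the variables in the prescribed round-robin order $v_1, v_2, \ldots$. The crucial point is that a rational opponent has no incentive to deviate: any ``illegal'' deviation (refusing to advance a forced piece, or touching a dormant gadget) either is impossible given the blocked points or strictly worsens Player~2's position, because after all variables are set the game collapses, as in the NP-hardness construction, into either (a) Player~2 capturing an unprotected clause piece into a blocked home and winning trivially, or (b) a material-advantage race that Player~1 wins once we feed high rolls to Player~1 and low rolls to Player~2. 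Thus Player~1 can force a win in backgammon if and only if Player~1 has a winning strategy in $G_{pos}$.

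After establishing the correspondence, I would argue the reduction is polynomial, mirroring Lemma~\ref{lemma:poly}: the board length is linear in the number of variables plus clauses, the piece counts are polynomial, and although the full game tree is large, the dice rolls and the \emph{forcing} structure are describable in polynomial space since they repeat per gadget up to an index pointer. Combined with the PSPACE-completeness of $G_{pos}$ \cite{schaefer1978complexity}, this yields PSPACE-hardness of Backgammon~USKR.

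The main obstacle I expect is the forcing argument on Player~2's turns. In the NP-hardness proof the opponent was fully scripted, so ``forcing'' amounted to making only one move legal; here we deliberately want Player~2 to have exactly one meaningful choice (which variable-literal to set) while having \emph{no} other useful option, and we must certify this robustly against an adaptive or stochastic adversary. The delicate part is ruling out ``stalling'' or ``spoiler'' moves --- e.g., Player~2 moving a dummy piece in a way that later interferes with propagation, or Player~2 choosing to leave a variable gadget in a half-set state. I would handle this by making every non-prescribed move either outright illegal (blocked points) or immediately punished: any deviation lets Player~1 reach a winning race regardless of the formula, so a rational Player~2 confines itself to the intended moves, and an irrational one only helps Player~1. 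Getting these invariants exactly right across the interleaving of $n$ gadgets, while keeping the dice-roll description polynomial, is where the real work lies.
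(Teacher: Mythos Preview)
Your reduction has a genuine conceptual gap: you have misidentified where the strategic choice in $G_{pos}$ lies. In $G_{pos}$ the formula is positive, so Player~1 will always set a chosen variable to True and Player~2 will always set a chosen variable to False; the \emph{meaningful} choice each turn is \emph{which} unset variable to touch, not what value to assign it. Your construction does the opposite: you force the variables to be processed in a fixed round-robin order $v_1,v_2,\ldots$ and give Player~2 a True/False choice ``within the currently active gadget.'' On a positive formula that choice is vacuous---Player~2 always picks False---so the game you build has a predetermined outcome and cannot encode a PSPACE-hard problem. Nor can you quietly switch the source problem to TQBF, since you explicitly rely on positivity to simplify the clause gadgets to ``some feeding variable was set True.''

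The paper's proof pivots on exactly this observation. Because the only live decision is \emph{which} variable, only an $x_i$ stack is needed per gadget (no $\overline{x_i}$ stack). At the start of Player~1's turn, Player~2 unblocks \emph{all} unset variable gadgets simultaneously; a single roll of two-and-three then lets Player~1 commit to any one of them, after which Player~2 re-blocks the rest and the chosen stack is drained to point~7. On Player~2's turn, the free move is to block point~16 in whichever gadget Player~2 likes, permanently setting that variable False. Propagation and the clause end-game are then reused from the NP-hardness proof unchanged. Your forcing/anti-deviation discussion and the polynomial-size accounting are the right supporting pieces, but they need to sit on top of this corrected ``choose-a-gadget'' mechanism rather than a ``choose-a-value'' one.
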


\begin{proof}
The reduction from $G_{pos}$ to Backgammon USKR
closely follows the reduction from $3SAT$ to
Backgammon KSKR so we primarily
focus on the differences.
Assume we are given an arbitrary $G_{pos}$
instance with $n$ variables and $k$ clauses.
The key observation is that,
since the CNF is positive, Player 1
will always set variables to True while Player 2
will always set variables to False.
We can therefore equivalently think about the game
as Player 1 moving variables to a True position
while Player 2 blocks variables from becoming True.
Once all the variables have been set,
we propagate the choices to the clause gadgets
as we did in the 3SAT reduction.

The winning conditions also remain the same.
Player 1 wins if and only if they are 
able to cover the open piece in each clause.
We require the opponent's strategy to be
unknown so they can adversarially set variables.

\begin{figure}
    \centering
    \includegraphics[scale=.3]{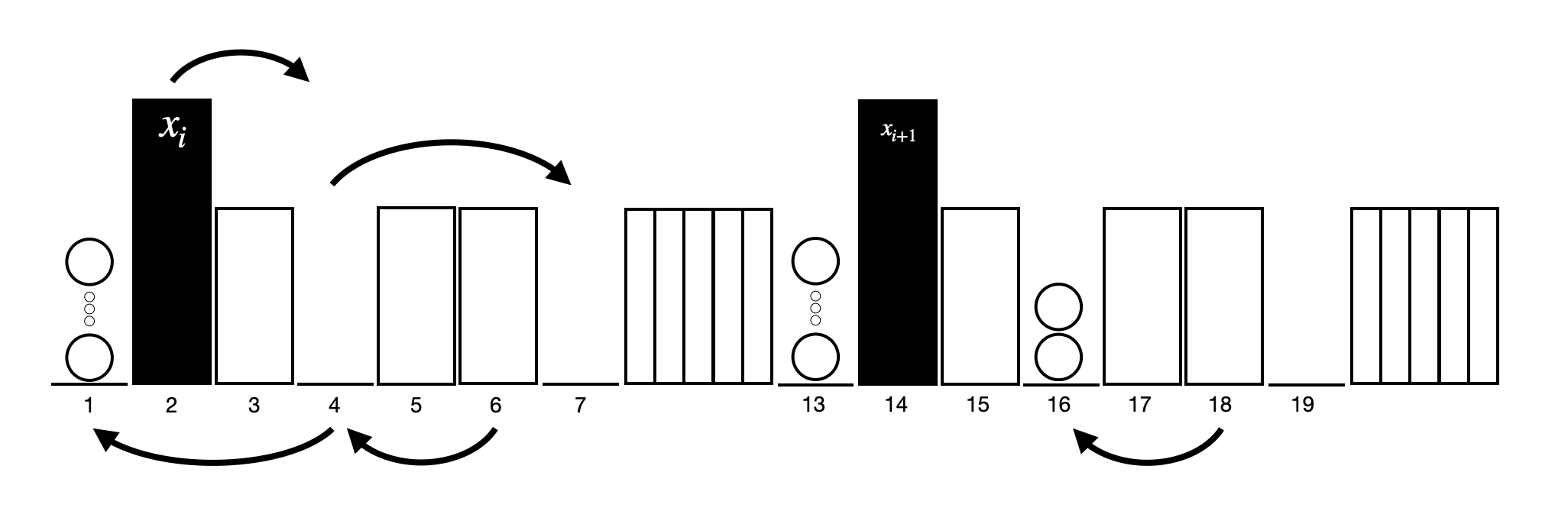}
    \caption{Reduction from $G_{pos}$: variable gadgets.
    Player 1 sets $x_i$ to True while Player 2
    has already set $x_{i+1}$ to False.}
    \label{fig:pspacehard}
\end{figure}

We now describe the variable gadgets in
Figure \ref{fig:pspacehard}.
In the 3SAT reduction, we needed a stack
of $x_i$ pieces and a stack of $\overline{x_i}$
pieces since Player 1 could set $x_i$ to
True or False.
Here, since the CNF is positive, Player 1
will only set variables to True and
so only require an $x_i$ stack.

At the beginning of Player 1's turn,
all unset variables are blocked by two
pieces on point 4.
Then, while Player 1 receives dummy rolls
of one, Player 2 moves all the blocking pieces
to 1.
Next, Player 1 receives a roll of two and three
and must choose which unset variable to set to True.
Once the variable is chosen,
Player 2 blocks all other unset variables by
moving two pieces from 6 to 4 while Player 1
again receives dummy rolls.
The remaining pieces for the chosen variable
are then moved from 2 to 4 to 7.

Player 2's turn is more simple.
They choose 
which variable to set to False and do so by blocking
16 from 18.
For the remainder of Player 1's turns,
the blocking pieces on 16 will not be moved.

After all the variables have been set
to True or False,
we move the pieces set to True to
the clauses they appear in.
We again move from $x_n$ to $x_1$,
removing the blocking pieces on 13 and then 1
as we go.
The process and clause gadgets are the same as in
the 3SAT reduction.

Player 1 wins in Backgammon USKR
if and only if the
positive CNF instance in $G_{pos}$ is
True after alternating setting variables
with Player 2.
Therefore the solution to Backgammon USKR
yields an answer to $G_{pos}$
and, by Lemma \ref{lemma:pspace},
$G_{pos}$ reduces in polynomial space
to Backgammon USKR.
\end{proof}

\begin{lemma}\label{lemma:pspace}
    The reduction from $G_{pos}$ to Backgammon
    USKR is polynomial in space complexity
    with respect to the number of clauses and variables
    in $G_{pos}$.
\end{lemma}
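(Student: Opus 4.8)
The plan is to mirror the structure of the proof of Lemma \ref{lemma:poly}, pointing out where the $G_{pos}$ reduction differs from the $3$SAT reduction and why each difference preserves a polynomial space bound. First I would bound the size of the board configuration itself: as in the KSKR case the board has length linear in $n+k$ (one variable gadget per variable, one clause gadget per clause, plus a constant buffer on each end), and the piece counts are linear as well — Player~1 needs only an $x_i$ stack of size at most the number of clauses $x_i$ occurs in (no $\overline{x_i}$ stack is needed since the CNF is positive), and Player~2 needs a constant number of blocking pieces per variable gadget together with a dummy stack near home. Hence the initial instance is polynomial in $n+k$.

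Next I would argue that the dice-roll sequence, though possibly exponential in length, admits a polynomial-size description, exactly as in Lemma \ref{lemma:poly}. The new subtlety is that the game is now interactive on the opponent's side — Player~2 adaptively chooses which variable to set to False — so one must check that the \emph{known} dice-roll stream fed to both players does not depend on those choices. It does not: during the assignment phase the rolls are a fixed alternation of dummy ones (while blocking pieces shuffle between points $4$, $1$, $6$, $16$, $18$) and the decisive $(2,3)$ rolls that let Player~1 select an unset variable; this pattern is identical across all $n$ rounds and is stored in constant space plus an index counter. During the propagation phase the rolls are (as before) sixes to walk each True-variable's pieces from its gadget down to the clauses it appears in, with the only per-index variation being the \emph{number} of sixes needed to traverse the intervening gadgets — storable as an array indexed by variable. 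The endgame again just repeats doubles-if-able for both players until the race resolves.

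The step I expect to be the main obstacle is making rigorous the claim that the roll schedule forces the intended play \emph{regardless of Player~2's adaptive choices}: one must verify that whichever unset variable Player~1 picks with the $(2,3)$ roll, Player~2 still has a legal blocking response for the remaining unset variables (enough pieces on point $6$ to cover every still-open point $4$), and that whichever variable Player~2 blocks at point $16$, the subsequent dummy rolls and propagation rolls remain playable by exactly one piece each. This is essentially a case-analysis bookkeeping argument — the configuration is engineered so that the blocked pieces and dummy stacks absorb every roll that is not the intended one — but it is where the proof has genuine content rather than routine counting. Once that is checked, the space bound follows: board, strategy description, and compressed roll schedule are each polynomial in $n+k$, so $G_{pos}$ reduces to Backgammon USKR in polynomial space, completing the lemma and hence Theorem \ref{thm:pspace}. \qed
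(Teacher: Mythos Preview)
Your proposal is correct and follows essentially the same approach as the paper's (very terse, two-sentence) proof: both argue by direct analogy to Lemma~\ref{lemma:poly}, observing that the board is linear in $n+k$ and that the dice-roll schedule admits a compact description indexed by stage and current gadget. Your final paragraph on verifying that the roll schedule forces the intended play under Player~2's adaptive choices is really correctness content belonging to Theorem~\ref{thm:pspace} rather than part of the space bound for this lemma, but that is harmless over-inclusion rather than a gap.
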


\begin{proof}
    As in the reduction from 3SAT to Backgammon KSKR,
    the size of the board is linear in the number of
    clauses and variables plus some constant buffer.
    By similar arguments, the size of the description
    is polynomial because it depends only on the stage
    of the reduction and the index of the current variable
    and clause gadgets.
\end{proof}

\section{EXPTIME-Hardness}\label{sec:exptime}

In this section, we show that determining
whether a player
can win against an unknown opponent strategy
and unknown
dice rolls (USUR) is EXPTIME-Hard.
We begin with formal definitions of Backgammon
USUR and the EXPTIME-Complete problem we reduce from.

\begin{definition}[Backgammon USUR]
The input is a configuration on a generalized
backgammon board.
The opponent's strategy and dice rolls are
unknown to the player.
The problem is to determine whether a player
can win the backgammon game from the backgammon
configuration against the unknown strategy and
dice rolls.
\end{definition}

\begin{definition}[$\mathbf{G_6}$ \cite{stockmeyer1979provably}]
The input is a CNF formula on sets of variables
$X$ and $Y$ and an initial assignment of the variables.
Player 1 and Player 2 alternate changing
at most one variable.
Player 1 may only change variables in $X$
while Player 2 may only change variables in $Y$.
Player 1 wins if the formula ever becomes true.
The problem is to determine whether Player 1
can win.
\end{definition}
Given any $G_6$ instance, we construct
a backgammon board configuration and exhibit
an opponent's strategy and dice rolls such that
the solution to Backgammon USUR yields the
solution to $G_6$.
Since $G_6$ is EXPTIME-Complete
\cite{stockmeyer1979provably},
Backgammon USUR must be EXPTIME-Hard.

\begin{theorem}
Backgammon USUR is EXPTIME-Hard.
\end{theorem}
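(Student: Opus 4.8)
The plan is to reduce from $G_6$ by extending the machinery already built for the PSPACE-Hardness result, since $G_6$ is structurally similar to $G_{pos}$ except that (i) the formula may contain negations, (ii) each player may change a variable \emph{at most once per turn} rather than setting it permanently, and (iii) Player 1 wins as soon as the formula becomes true at \emph{any} point, not only after all variables are set. The overall architecture is the same: a collection of variable gadgets on the left feeding into clause gadgets on the right, with Player 1 winning exactly when every clause has its open piece protected. The new work is to make the variable gadgets \emph{reusable} so a value can be toggled back and forth, to route both a variable and its negation into clauses, and to have the game re-evaluate the formula after every half-move.

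First I would replace the one-shot variable gadget with a two-state toggle gadget: for each $x\in X$ a small region with a distinguished ``True slot'' and ``False slot'' such that on Player 1's turn Player 1 may, using a fixed dummy-roll protocol, move the controlling stack of $x$ from one slot to the other (or leave it), and similarly a $Y$-region that Player 2 controls. Because dice rolls are now \emph{unknown}, I no longer need to hand-feed rolls to force moves; instead I would show that with the reachable rolls there is always a canonical ``honest'' continuation, and that any deviation by Player 1 either is immediately losing (e.g.\ exposes a piece that Player 2 captures into a blocked home, exactly as in Theorems~\ref{thm:np} and~\ref{thm:pspace}) or is harmless, so WLOG Player 1 plays honestly; symmetrically, Player 2's deviations only help Player 1 or are handled by the capture-and-block trap. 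Second, I would implement negation by running two parallel propagation channels from each variable gadget---one active when the variable reads True, one when it reads False---so that a clause containing a literal is covered precisely when that literal currently evaluates to true. Third, to capture the ``win if the formula \emph{ever} becomes true'' condition, I would make the propagation/evaluation happen continuously: after each of Player 1's moves the current assignment is pushed to the clause gadgets and Player 1 is offered a roll that lets a ``victory piece'' escape to bear off only if all clauses are simultaneously covered; if so Player 1 races home and wins (material/roll advantage as before), and if not the victory piece is harmlessly re-absorbed and play continues with the next toggle. The reduction must also encode the initial assignment by the starting positions of the controlling stacks, and bound the game so it terminates (the simulated $G_6$ game is finite, and each simulated half-move is a bounded number of backgammon plies).

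The main obstacle I anticipate is the reusability of the variable gadgets combined with the lack of controlled dice. In the earlier reductions the gadgets are used monotonically and exactly once, and the dice are adversarially chosen \emph{by the reduction} to force moves; here a toggle gadget must return to a clean, re-enterable state after every use, and I must argue this robustly against arbitrary (adversarial) rolls and an adversarial opponent. The crux is a careful invariant---something like ``at the start of each player's turn the board is in one of a bounded family of canonical configurations, indexed by the current $G_6$ assignment and whose turn it is''---together with a case analysis showing every reachable roll admits a move keeping the invariant, and that the only non-invariant-preserving options are losing for the deviating player via the capture-into-blocked-home mechanism. Once that invariant is in place, correctness (Player 1 wins the backgammon instance iff Player 1 wins $G_6$) follows by induction on the simulated game, and an analogue of Lemmas~\ref{lemma:poly} and~\ref{lemma:pspace} gives the polynomial bound on the board size (note the reduction need only be polynomial, and unlike the known-rolls cases we do not even need to store an exponential roll sequence). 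Since $G_6$ is EXPTIME-Complete \cite{stockmeyer1979provably}, this establishes that Backgammon USUR is EXPTIME-Hard.
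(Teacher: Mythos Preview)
Your overall architecture matches the paper's---reduce from $G_6$ with reusable variable gadgets feeding clause gadgets, and let Player 1 win exactly when every clause is simultaneously covered---but there is a real gap in your termination argument.

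You assert that ``the simulated $G_6$ game is finite'' and plan to ``bound the game so it terminates.'' This is false: in $G_6$ each player changes \emph{at most} one variable per turn (so passing is legal) and Player 1 wins only if the formula is \emph{ever} true; there is no other termination condition. A $G_6$ play can therefore continue indefinitely, and even a winning play for Player 1 may require exponentially many moves---this unboundedness is precisely what separates $G_6$ from $G_{pos}$ and underlies its EXPTIME-completeness. Your toggle gadgets must consequently be reusable an \emph{unbounded} number of times, which is impossible if pieces only advance on a polynomial-length board with a polynomial number of pieces. The paper's fix is to exploit the capture rule as a recycling mechanism: whenever a stack runs low, excess pieces are walked toward home, captured one by one by the opponent, and thereby re-enter from the far end to replenish the gadgets. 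Without this (or an equivalent device) your reduction simulates only polynomially many $G_6$ half-moves and does not reach EXPTIME.

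A secondary difference worth noting: you read ``unknown dice'' as obliging Player 1 to cope with \emph{every} possible roll and plan an invariant argument over all outcomes. The paper instead treats the dice as part of the adversary and simply \emph{feeds} specific rolls to drive the protocol (a six for Player 1 to select a variable, then tailored rolls to update the affected clauses, and finally a fixed pattern after Player 1 moves a dedicated signal piece to end the variable phase). Your route is not incorrect in principle, but it is considerably harder to execute, and your sketch does not yet explain how the toggle and propagation gadgets remain sound across all dice outcomes; the paper sidesteps this entirely by choosing the rolls.
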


\begin{proof}
The proof consists of a reduction
from $G_6$ to Backgammon USUR.
Assume we are given a CNF formula
with $n_x$ variables $X$, $n_y$ variables
$Y$, $k$ clauses,
and an initial assignment to $X$ and $Y$.
First, Player 1 and Player 2 take turns
changing variables in their respective
sets $X$ and $Y$.
Then, once Player 1 gives the signal,
the game progresses to a board state where
Player 1 wins if and only if the CNF formula
is True on the current assignment.

Player 1 changes variable $x_i$ by moving
a signal piece corresponding to $x_i$.
Then, with Player 2's help, we feed Player 1
dice rolls that update the clause gadgets
that contain $x_i$.
We require that the dice rolls adaptively
respond to Player 1 and that Player 2 can
adversarially set variables in $Y$.

\begin{figure}
    \centering
    \includegraphics[scale=.18, page=9]{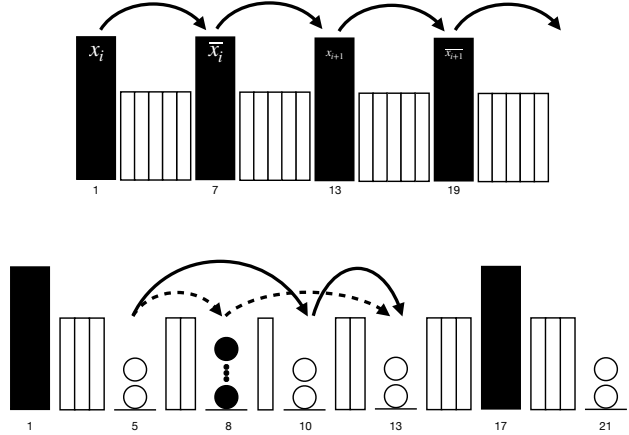}
    \caption{Reduction from $G_6$: variable gadget (top) and clause gadget (bottom).}
    \label{fig:exptimehard}
\end{figure}

We next describe the gadgets in Figure 
\ref{fig:exptimehard}.
The variable gadgets consist of stacks of
pieces corresponding to $x_i$ and $\overline{x_i}$
for $i \in [n_x]$.
On their turn, Player 1 changes a variable
by using their six to move the appropriate piece.
For example, Player 1 can change $x_i$ to False
by moving a piece on point 7 to 13 as shown
at the top of Figure \ref{fig:exptimehard}.
If $x_i$ is already False, Player 1 has effectively
skipped their turn (which is an acceptable
move in $G_6$).

Once Player 1 changes a variable,
Player 2 and the dice rolls work together to update
the appropriate clauses.
The key insight of a clause is that it is
True if at least one variable in the clause
is True in either $X$ or $Y$.
We represent this on the backgammon board as
shown at the bottom of Figure \ref{fig:exptimehard}.
Point 8 is empty or contains Player 1's
pieces if at least one of the $X$ variables
in the clause is True and 10 is empty
if at least one of the $Y$ variables in
the clause is True.
Therefore Player 1 can progress a piece from 5 to 13
on rolls three and five if and only if
either a variable in $X$
or $Y$ in the clause is True.

We update the clause when Player 1 sets
a variable in one of two ways:
If the variable is True in the clause,
we move two pieces from 1
to 5 to 8.
If the variable is False in the clause,
we move two pieces from 8 to 13 to 17.
If 8 becomes empty, we move two pieces from
a white stack to 8 in order to block Player 1
from unfairly using it to bypass the clause.
By using blocking pieces on 5 and 13 we ensure
the correct clause is modified.

We update the clause when Player 2 sets
a variable in one of two ways:
If the variable is True in the clause and
all other $Y$ variables in the clause are True,
we move two pieces from 10 to another white stack.
If the variable is False in the clause and
all other $Y$ variables in the clause are True,
we move two pieces from another white stack to 10.
In all other cases, 10 should remain in its
current `open' or `closed' position.

Notice that the process of changing variables
could continue indefinitely.
We make sure that we do not run out of pieces by using
the capture rule.
If Player 1 needs more pieces in the variable or
clause gadgets, we feed them rolls to move excess
pieces through the board towards their home where
Player 2 will capture them one by one.
We perform an analogous process if
Player 2 needs more pieces.

The variable changing process ends when,
instead of moving a variable piece,
Player 1 moves a specified signal
at the end of the variable gadgets.
Then Player 1 will receive enough six and 4-3-5-4 rolls
to move all of their pieces home while Player 2
makes slow progress.
If all of the clauses are True, Player 1 can
successfully get all their pieces home and win
the game.
Otherwise, they will be blocked at a False clause
and Player 2 will continue 
their slow progress until all white pieces
except for those in the False clause remain.
We will then feed small rolls to Player 1 and 
large rolls to Player 2 so Player 2 can capitalize
on their advantage and win.

We have therefore simulated the $G_6$ instance
and Player 1 can win Backgammon USUR if and only
if they can win the corresponding $G_6$ game.
As before, the reduction space is polynomial 
in the $G_6$ input size
since there are a constant number of pieces
and points for every clause and variable.
\end{proof}

Notice that Backgammon USUR is not obviously
EXPTIME-Complete because the game can progress
indefinitely as a result of the capture rule.

\section{Conclusion}

We show that deciding whether a player can win
a backgammon game under different settings of
known or unknown opponent strategies and dice
rolls is NP-Hard, PSPACE-Hard, and EXPTIME-Hard.
It would seem that our results show backgammon
is hard even when it is a one-player game.
However, in the settings for our PSPACE-Hardness
and EXPTIME-Hardness results, the second player
is hidden in the unknown nature of the opponent's
strategy and dice rolls.

Despite the popularity of backgammon and
academic interest in the computational complexity
of games, to the best of our knowledge
our work is the first to address
the complexity of backgammon.
One possible explanation is the apparent
ambiguity in generalizing backgammon.
We contend, however, that the backgammon
generalization we use is as natural as those for
other games like checkers or chess.
Another explanation is the difficulty
in forcing backgammon moves as needed for a reduction.

Interestingly, it is not clear that the
problems we consider are in EXPTIME because
backgammon games can theoretically continue
indefinitely.
One natural follow up question to our work
is what complexity class contains these
backgammon problems.

\bibliographystyle{splncs04}
\bibliography{bibliography}
\end{document}